\begin{document}

\title{Proving chaotic behaviour of CBC mode of operation}

\author{Abdessalem Abidi$^1$, Qianxue Wang$^3$, Belgacem Bouallegue$^1$,\\ Mohsen Machhout$^1$, and Christophe Guyeux$^2$
}

\maketitle

\begin{abstract}
The cipher block chaining (CBC) block cipher mode of operation was invented by IBM (International Business Machine) in 1976. It presents a very popular way of encrypting which is used in various applications. In this paper, we have mathematically proven that, under some conditions, the CBC mode of operation can admit a chaotic behaviour according to Devaney. Some cases will be properly studied in order to put in evidence this idea.
\end{abstract}

%\begin{multicols}{2}
\section{Introduction}
\label{intro}

Block ciphers have a very simple principle. They do not treat the original text bit by bit but they manipulate blocks of text -- for example, a block of 64 bits for the DES (Data Encryption Standard) or a block of 128 bits for the AES (Advanced Encryption Standard) algorithm. In fact, the original text is broken into blocks of $\mathsf{N}$ bits. For each block, the encryption algorithm is applied to obtain an encrypted block which has the same size. Then we gather all blocks, which are encrypted separately, to obtain the complete encrypted message. For decryption, we precede in the same way but this time starting from the cipher text to obtain the original message using the decryption algorithm instead of the encryption function. So, it is not sufficient to put anyhow a block cipher algorithm in a program. We can instead use these algorithms in various ways according to their specific needs. These ways are called the block cipher modes of operation. There are several modes of operation and each mode has owns characteristics and its specific security properties. In this paper, we will consider only one of these modes which is the cipher block chaining (CBC) one, and we will study it according to chaos.

%Chaotic iterations have been introduced on the one hand by Chazan and
%Miranker~\cite{chazan1969chaotic} in a numerical analysis context and on the other hand by Robert ~\cite{robert1986discrete} in the discrete dynamical systems framework. 
The chaos theory we consider in this article is the Devaney's topological one~\cite{devaney1989introduction}.In addition to being recognized as one of the
best mathematical definition of chaos, this theory offers a
framework with qualitative and quantitative tools to evaluate
the notion of unpredictability~\cite{bahi2011efficient}. As an application of our fundamental results, we are interested in the area of information safety and security. %We propose in this paper a new approach of security which is based on unpredictability as it is defined by Devaney’s chaos.
The goal of this paper is to study the conditions under which the CBC mode of operation can admit a Devaney's chaotic behavior. 

The remainder of this paper is organized as follows. In Section~\ref{section:BASIC RECALLS}, we will recall some basic definitions concerning chaos and  cipher-block chaining mode of
operation. 
Section~\ref{sec:proof} is devoted to the proofs of chaotic behaviour for the CBC mode. Some cases will be studied in Section~\ref{sec:caseStudy} while results are discussed in the next section. This research work ends by a conclusion section, in which contributions are recalled and some intended future work are proposed.

\section{Basic recalls}
\label{section:BASIC RECALLS}
This section is devoted to basic definitions and terminologies in the field of topological chaos and in the one of block cipher mode of operation.

\subsection{Devaney's Chaotic Dynamical Systems}
\label{subsec:Devaney}
In the remainder of this article, $S^{n}$ denotes the $n^{th}$ term of a sequence $S$ while $\mathcal{X}^\mathds{N}$ is the set of all sequences whose elements belong to $\mathcal{X}$. $V_{i}$
stands for the $i^{th}$ component of a vector $V$. $f^{k}=f\circ ...\circ f$
is for the $k^{th}$ composition of a function $f$. $\mathds{N}$ is the set of natural (non-negative) numbers, while $\mathds{N}^*$ stands for the positive integers $1, 2, 3, \hdots$ Finally, the following
notation is used: $\llbracket1;N\rrbracket=\{1,2,\hdots,N\}$.

Consider a topological space $(\mathcal{X},\tau)$ and a continuous function $f :
\mathcal{X} \rightarrow \mathcal{X}$ on $(\mathcal{X},\tau)$.

\begin{definition}
The function $f$ is \emph{topologically transitive} if, for any pair of open sets
$U,V \subset \mathcal{X}$, there exists an integer $k>0$ such that $f^k(U) \cap V \neq
\varnothing$.
\end{definition}

\begin{definition}
An element $x$ is a \emph{periodic point} for $f$ of period $n\in \mathds{N}$, $n>1$,
if $f^{n}(x)=x$ and $f^k(x) \neq x, 1\le k\le n$. % \linebreak % The set of periodic points of $f$  is denoted $Per(f).$ 
%\end{definition}
%
%\begin{definition}
$f$ is  \emph{regular} on $(\mathcal{X}, \tau)$ if the set of periodic
points for $f$ is dense in $\mathcal{X}$: for any point $x$ in $\mathcal{X}$,
any neighborhood of $x$ contains at least one periodic point.
\end{definition}

\begin{definition}
\label{sensitivity} The function $f$ has \emph{sensitive dependence on initial conditions}
if there exists $\delta >0$ such that, for any $x\in \mathcal{X}$ and any
neighborhood $V$ of $x$, there exist $y\in V$ and $n > 0$ such that
$$d\left(f^{n}(x), f^{n}(y)\right) >\delta .$$
$\delta$ is called the \emph{constant of sensitivity} of $f$.
\end{definition}

\begin{definition}[Devaney's formulation of chaos~\cite{Devaney}]
The function $f$ is  \emph{chaotic} on a metric space $(\mathcal{X},d)$ if $f$ is regular,
topologically transitive, and has sensitive dependence on initial conditions.
\end{definition}

Banks \emph{et al.} have proven in~\cite{Banks92} that when $f$ is regular and transitive on a metric space $(\mathcal{X}, d)$, then $f$ has the property of sensitive dependence on initial conditions. This is why chaos can be formulated too in a topological space $(\mathcal{X}, \tau)$: in that situation, chaos is obtained when $f$ is regular and
topologically transitive.
Note that the transitivity property is often obtained as a consequence of the strong transitivity one, which is defined below.

\begin{definition}
\label{def:strongTrans}
$f$ is \emph{strongly transitive} on $(\mathcal{X},d)$ if, for all point $x,y \in \mathcal{X}$ and for all neighborhood $\mathcal{V}$ of $x$, it exists $n \in \mathds{N}$ and $x'\in \mathcal{V}$ such that $f^n(x')=y$. 
\end{definition}

 \subsection{CBC properties}
 \label{sec:CBC properties}

 Like some other modes of operation, the CBC mode requires not only a plaintext but also an initialization vector (IV) as input.
 In what follows, we will show how this mode of operation works in practice.

 \subsubsection{Initialisation vector IV}
 As what have been already announced, in addition to the plaintext the CBC mode of operation requires an initialization vector in order to randomize the encryption. This vector is used to produce distinct ciphertexts even if the same plaintext is encrypted multiple times, without the need of a slower re-keying process~\cite{huang2013novel}.

An initialization vector must be generated for each execution of the encryption operation, and the same vector is necessary for the corresponding execution of the decryption operation, see Figure~\ref{fig:CBC}. Therefore the IV, or information that is sufficient to calculate it, must be available to each party of any communication. 
The initialization vector does not need to be secret, so the IV, or information sufficient to determine the IV, may be transmitted with the cipher text. In addition, the initialization vector must be unpredictable: for any given plaintext, it must not be possible to predict the IV that will be associated to the plaintext, in advance to the vector generation~\cite{dworkin2001recommendation}.

There are two recommended methods for generating unpredictable IVs. The first method is to apply the forward cipher function, under the same key that is used for the encryption of the plaintext, to a nonce. The nonce must be a data block that is unique to each execution of the
encryption operation. For example, the nonce may be a counter or a message number. The second method is to generate a random data block using a FIPS (Federal Information Processing Standard)-approved random number generator~\cite{dworkin2001recommendation,annex2005approved}. 

\subsubsection {Padding process}
A block cipher works on units of a fixed size (known as a block size), but messages come in variety of lengths. So some modes, namely the ECB (Electronic Codebook)  and CBC ones, require that the final block is padded before encryption. In other words, the total number of bits in the plaintext must be a positive multiple of the block size $\mathsf{N}$.

If the data string to be encrypted does not initially satisfy this property, then the formatting of the plaintext must entail an increase in the number of bits. A common way to achieve the necessary increase is to append some extra bits, called padding, to the trailing end of the data string as the last step in the formatting of the plaintext.  An example of a padding method is to append a single $1$ bit to the data string and then to pad the resulting string by as few $0$ bits, possibly none, as are necessary to complete the final block (other methods may be used). 

For the above padding method, the padding bits can be removed unambiguously provided the receiver can determine that the message is indeed padded. One way to ensure that the receiver does not mistakenly remove bits from an unpadded message is to require the sender to pad every message, including messages in which the final block is already complete. For such messages, an entire block of padding is appended.  Alternatively, such messages can be sent without padding if, for each message, the existence of padding can be reliably inferred, \emph{e.g.}, from a message length indicator~\cite{dworkin2001recommendation}.

\subsubsection{CBC mode characteristics}
% Several characteristics  and security properties of the CBC mode are already known. we think it is useful to recall them in order to remind our selves of the important role which was occupied by the the CBC mode in the cryptography field.
Cipher block chaining is a block cipher mode that provides confidentiality but not message integrity in cryptography. 
The CBC mode offers a solution to the greatest part of the problems presented by the ECB (Electronic codebook) for example~\cite{schiltz2003modes}, because thanks to CBC mode the encryption will depends on the context. Indeed, the cipher text of each block encrypted by a keyed encryption function $\varepsilon_k$, where $k$ is the secret key, will depend not only on the initialization vector IV, but also on the plaintext of all preceding blocks.
Specifically, the binary operator XOR is applied between the current bloc of the plaintext and the previous block of the cipher text, as depicted in Figure~\ref{fig:CBC}. Then, we apply the encryption function to the result of this operation. For the first block, the initialization vector takes place of the previous cipher text block.

\begin{figure}[!h]
    \centering
 \subfigure[CBC encryption mode]{\label{fig:CBCenc}
        \includegraphics[scale=0.5]{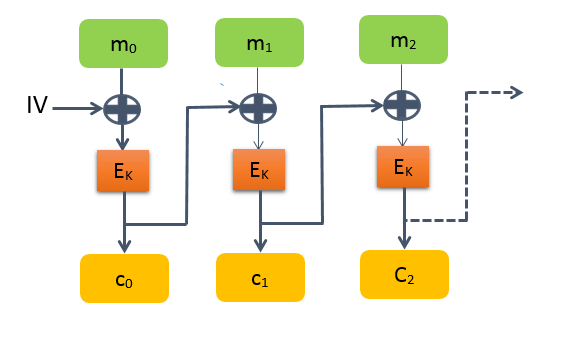}}     \subfigure[CBC decryption mode]{\includegraphics[scale=0.5]{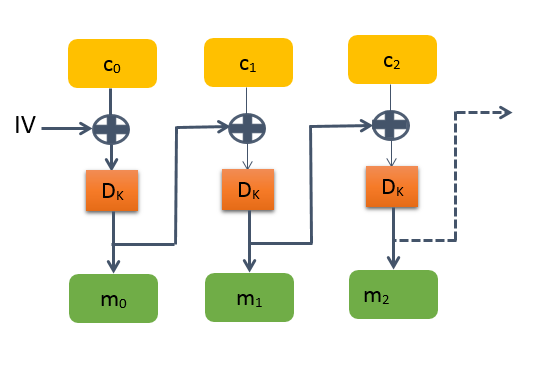}}
    \caption{CBC mode of operation}
     \label{fig:CBC}
\end{figure}

CBC mode has several advantages. In fact, this mode encrypts the same plaintext differently with different initialization vectors. In addition, the encryption of each block depends on the preceding block and therefore, if the order of the cipher text blocks is modified, the decryption will be impossible and the recipient realizes the problem.
Furthermore, if a transmission error affects the encrypted block $C_{i}$, then only the blocks $m_{i}$ and $m_{i+1}$ are assigned, the other blocks will be determined correctly.

CBC has been the most commonly used mode of operation. Its main drawbacks are that encryption is sequential (\emph{i.e.}, it cannot be parallelized), and that the message must be padded to a multiple of the cipher block size. One way to handle this last issue is through the method known as cipher text stealing. Note that a one-bit change in a plaintext or IV affects all following cipher text blocks.

Decrypting with the incorrect initialization vector causes the first block of plaintext to be corrupted, but subsequent plaintext blocks will be correct. This is because a plaintext block can be recovered from two adjacent blocks of cipher text. As a consequence, decryption can be parallelized. Note that a one-bit change on the cipher text causes complete corruption of the corresponding block of plaintext, and inverts the corresponding bit in the following block of plaintext, but the rest of the blocks remain intact. 

%The CBC mode do not shows any detection of integrity.Consequently, it remains vulnerable to active attacks. However, it has a good security against passive attacks. As soon as the encrypted contains two equal block, we ca so deduce information about the original message: if $c_{i}$=$c_{j}$
\section{Proof of chaos}
\label{sec:proof}
%In Section~\ref{sec:proof}, 
In this section, the proof of chaos of the CBC mode of operation is detailed. Modeling follows a same canvas than what has been done for hash functions~\cite{bg10:ij,gb11:bc} or pseudorandom number generation~\cite{bfgw11:ij}.

Let us consider the CBC mode of operation with a keyed encryption function $\varepsilon_k:\mathds{B}^\mathsf{N} \rightarrow \mathds{B}^\mathsf{N} $ depending on a secret key $k$, $\mathsf{N}$ is the size for the block cipher, and $\mathcal{D}_k:\mathds{B}^\mathsf{N} \rightarrow \mathds{B}^\mathsf{N} $ is the associated  decryption function, such that for any k, $\varepsilon_k \circ \mathcal{D}_k$ is the identity function. We define  
the Cartesian product $\mathcal{X}=\mathds{B}^\mathsf{N}\times\mathcal{S}_\mathsf{N}$, where:
\begin{itemize}
\item $\mathds{B} = \{0,1\}$ is the set of Boolean values,
\item $\mathcal{S}_\mathsf{N} = \llbracket 0, 2^\mathsf{N}-1\rrbracket^\mathds{N}$, the set of infinite sequences of natural integers bounded by $2^\mathsf{N}-1$, or the set of infinite $\mathsf{N}$-bits block messages,
\end{itemize}
in such a way that $\mathcal{X}$ is constituted by couples of internal states of the mode of operation together with sequences of block messages.
Let us consider the initial function:
$$\begin{array}{cccc}
 i:& \mathcal{S}_\mathsf{N} & \longrightarrow & \llbracket 0, 2^\mathsf{N}-1 \rrbracket \\
 & (m^i)_{i \in \mathds{N}} & \longmapsto & m^0
\end{array}$$
that returns the first block of a (infinite) message, and the shift function:
$$\begin{array}{cccc}
 \sigma:& \mathcal{S}_\mathsf{N} & \longrightarrow & \mathcal{S}_\mathsf{N} \\
 & (m^0, m^1, m^2, ...) & \longmapsto & (m^1, m^2, m^3, ...)
\end{array}$$
which removes the first block of a message. Let $m_j$ be the $j$-th bit of integer, or block message, $m\in \llbracket 0, 2^\mathsf{N}-1 \rrbracket$ expressed in the binary numeral system, and when counting from the left. We define:
$$\begin{array}{cccc}
F_f:& \mathds{B}^\mathsf{N}\times \llbracket 0, 2^\mathsf{N}-1 \rrbracket & \longrightarrow & \mathds{B}^\mathsf{N}\\
 & (x,m) & \longmapsto & \left(x_j m_j + f(x)_j \overline{m_j}\right)_{j=1..\mathsf{N}} 
\end{array}$$
This function returns the inputted binary vector $x$, whose $m_j$-th components $x_{m_j}$ have been replaced by $f(x)_{m_j}$, for all $j=1..\mathsf{N}$ such that $m_j=0$. In case where $f$ is the vectorial negation, this function will correspond to one XOR between the clair text and the previous encrypted state.  
So the CBC mode of operation can be rewritten as the following dynamical system:
\begin{equation}
\label{eq:sysdyn}
\left\{
\begin{array}{ll}
X^0 = & (IV,m)\\
X^{n+1} = & \left(\mathcal{E}_k \circ F_{f_0} \left( i(X_1^n), X_2^n\right), \sigma (X_1^n)\right)
\end{array}
\right.
\end{equation}
For any given $g:\llbracket 0, 2^\mathsf{N}-1\rrbracket \times \mathds{B}^\mathsf{N} \longrightarrow \mathds{B}^\mathsf{N}$, we denote $G_g(X) = \left(g(i(X_1),X_2);\sigma (X_1)\right)$ (when $g = \mathcal{E}_k\circ F_{f_0}$, we obtain one cypher block of the CBC, as depicted in Figure~\ref{fig:CBC}). So the recurrent relation of Eq.\eqref{eq:sysdyn} can be rewritten in a condensed way, as follows.
\begin{equation}
X^{n+1} = G_{\mathcal{E}_k\circ F_{f_0}} \left(X^n\right) .
\end{equation}
With such a rewriting, one iterate of the discrete dynamical system above corresponds exactly to one cypher block in the CBC mode of operation. Note that the second component of this system is a subshift of finite type, which is related to the symbolic dynamical systems known for their relation with chaos~\cite{Lind:1995:ISD:230198}.

We now define a distance on $\mathcal{X}$ as follows: $d((x,m);(\check{x},\check{m})) = d_e(x,\check{x})+d_m(m,\check{m})$, where:
$$\left\{\begin{array}{ll}
d_e(x,\check{x})  & = \sum_{k=1}^\mathsf{N} \delta (x_k,\check{x}_k)  \\
&\\
d_m(m,\check{m}) & = \displaystyle{\dfrac{9}{\mathsf{N}} \sum_{k=1}^\infty \dfrac{\sum_{i=1}^\mathsf{N} \left|m_i - \check{m}_i\right|}{10^k}} .
\end{array}\right.$$
This distance has been introduced to satisfy the following requirements:
\begin{itemize}
\item The integral part between two points $X,Y$ of the phase space $\mathcal{X}$ corresponds to the number of binary components that are different between the two internal states $X_1$ and $Y_1$.
\item The $k$-th digit in the decimal part of the distance between $X$ and $Y$ is equal to 0 if and only if the $k$-th blocks of messages $X_2$ and $Y_2$ are equal. This desire is at the origin of the normalization factor $\dfrac{9}{\mathsf{N}}$.
\end{itemize}
We will now prove that,
\begin{proposition}
$G_g$ is a continuous map on $(\mathcal{X},d)$.
\end{proposition}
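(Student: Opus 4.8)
The plan is to verify continuity directly from the metric definition: I fix an arbitrary point $X \in \mathcal{X}$ and a tolerance $\epsilon > 0$, and I exhibit a radius $\eta > 0$ such that $d(X,\check{X}) < \eta$ implies $d(G_g(X), G_g(\check{X})) < \epsilon$. The whole argument rests on two structural features of the space. First, the distance splits as $d = d_e + d_m$, and $G_g$ itself splits into two independent outputs -- a new internal block, obtained by applying $g$ to the first message block $i(X_1)$ and the current state $X_2$, and a shifted message sequence $\sigma(X_1)$. So it suffices to control these two outputs separately, measuring the first with $d_e$ and the second with $d_m$. Second, $d_e$ is a Hamming-type distance taking only the integer values $0,1,\dots,\mathsf{N}$, whereas $d_m$ ranges continuously in $[0,1]$; this dichotomy is what makes the two halves of the proof qualitatively different.

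For the internal-state output I will argue that a sufficiently small $\eta$ forces the two new blocks to coincide exactly, so that their $d_e$-distance is $0$. Indeed, since $d_e \leq d$ is a nonnegative integer, requiring $d(X,\check{X}) < 1$ already forces $d_e = 0$, i.e. the state components of $X$ and $\check{X}$ coincide. Likewise, a differing first message block contributes to $d_m$ a term bounded below by a fixed positive constant of order $\mathsf{N}^{-1}$ (this is exactly what the normalisation factor $\tfrac{9}{\mathsf{N}}$ guarantees), so imposing $d(X,\check{X})$ below that constant forces $i(X_1) = i(\check{X}_1)$. With both the state and the extracted block equal, the new cipher blocks $g(i(X_1),X_2)$ agree, regardless of $g$; note that no continuity hypothesis on $g$ is needed, because $g$ is defined on the finite set $\llbracket 0,2^\mathsf{N}-1\rrbracket \times \mathds{B}^\mathsf{N}$ and we have forced its two arguments to match exactly.

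For the message-sequence output I will show that the shift $\sigma$ is Lipschitz for $d_m$ with constant $10$: deleting the leading block raises each surviving weight $10^{-k}$ to $10^{-(k-1)}$, whence $d_m(\sigma(X_1),\sigma(\check{X}_1)) \leq 10\, d_m(X_1,\check{X}_1) \leq 10\, d(X,\check{X})$. It therefore suffices to also demand $d(X,\check{X}) < \epsilon/10$ to make this contribution smaller than $\epsilon$. Taking $\eta$ to be the minimum of the threshold from the previous paragraph and $\epsilon/10$ then yields $d(G_g(X),G_g(\check{X})) = 0 + d_m(\sigma(X_1),\sigma(\check{X}_1)) < \epsilon$, which proves continuity at the arbitrary point $X$.

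The main obstacle is not any single estimate but the bookkeeping that couples the discrete and the continuous components correctly: one must check that a single radius $\eta$ simultaneously drives the integer part $d_e$ of the output all the way to $0$ by forcing exact equality of the finitely many discrete data, and controls the genuinely metric contribution coming from the shift. The delicate point is pinning down the precise lower bound that a single differing block imposes on $d_m$ -- including the off-by-one indexing between the block extracted by $i$ and the summation index $k$ in $d_m$ -- since it is this bound that certifies that small distance really does imply equal first blocks. Once the weight of each block in $d_m$ is made explicit, the rest of the argument is routine.
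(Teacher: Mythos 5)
Your proof is correct and rests on exactly the same two observations as the paper's: the integer-valued $d_e$ and the $\tfrac{9}{10\mathsf{N}}$ weight of the first block force the discrete data of $X$ and $\check{X}$ to coincide exactly (so the new cipher blocks agree with no hypothesis on $g$), while the shift only costs one decimal digit in $d_m$. The only difference is presentational -- you use the $\epsilon$--$\eta$ definition with an explicit modulus (in fact proving uniform continuity via the Lipschitz constant $10$ for $\sigma$), whereas the paper runs the same estimates through the sequential characterization of continuity.
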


\begin{proof}
Let us consider a sequence $X^n$ of elements of $\mathcal{X}$, which converges to $\check{X}$. We denote by $X^n = (x^n, m^n)$ and $\check{X}=(\check{x},\check{m})$, so that (1) $x^n \longrightarrow \check{x}$ and (2) $m^n \longrightarrow \check{m}$.

Due to (1) and the definition of $d_e$, we have that $\exists n_0 \in \mathds{N}, \forall n \geqslant n_0, x^n=\check{x}$, while (2) and definition of $d_m$ imply that $\exists n_1 \in \mathds{N}, \forall n \geqslant n_1, m_0^n=\check{m}_0$.
Let $\varepsilon > 0$.
\begin{itemize}
\item If $\varepsilon \geqslant 1$, then for $n \geqslant \max (n_0, n_1)$, 

\begin{tabular}{ll}
$d\left(G_g(X^n),G_g(\check{X})\right)$ &  $= d\left( \left( g(i(X_0^n),X_1^n ); \sigma(X_0^n) \right) ;  \left( g(i(\check{X}_0^n),\check{X}_1^n );\sigma(\check{X}_0^n)\right) \right)$ \\
 & 
 $= d\left( \left( g(i(m^n),x^n );\sigma(m^n) \right) ;  \left(  g(i(\check{m}^n),\check{x}^n );\sigma(\check{m}^n)\right) \right)$ \\
  & 
 $= d\left( \left( g(m_0^n,x^n );\sigma(m^n) \right) ;  \left(  g(\check{m}_0^n,\check{x}^n ); \sigma(\check{m}^n)\right) \right)$ \\
   & 
 $= d\left( \left( g(\check{m}_0^n,\check{x}^n );\sigma(m^n) \right) ;  \left(  g(\check{m}_0^n,\check{x}^n ); \sigma(\check{m}^n)\right) \right) $ \\
 &
 $= d_e\left(g(\check{m}_0,\check{x});g(\check{m}_0,\check{x})\right)+d_m(\sigma(m^n),\sigma(\check{m}))$\\
  &
 $= d_m(\sigma(m^n),\sigma(\check{m}))$\\
 & $<1<\varepsilon$.
\end{tabular}
\item Else, there exists an integer $k$ such that $10^{-k}\geqslant \varepsilon > 10^{-(k+1)}$. $m^n \longrightarrow \check{m}$, so $\exists n_2 \in \mathds{N}, \forall n \geqslant n_2, d_m(m^n,\check{m})<10^{-(k+2)}$, \emph{i.e.}, for all terms larger than $n_2$, the $k+2$-th first terms of $m^n$ and $\check{m}$ are equal. So the $k+1$-th first terms of $\sigma(m^n)$ and $\sigma(\check{m})$ are equal too, and thus $d_m\left(\sigma(m^n),\sigma(\check{m}^n)\right) < 10^{-(k+1)}<\varepsilon$.
\end{itemize}
We thus have proven, using the sequential characterization of the continuity, that $G_g$ is continuous on $(\mathcal{X},d)$.
\end{proof}

Let us now recall that a directed graph is strongly connected when it contains a directed path from $u$ to $v$ and a directed path from $v$ to $u$ for every pair of vertices $u$, $v$. Then we have the following proposition.

\begin{proposition}
\label{prop:transitivity}
Let $g=\varepsilon_k \circ F_{f_0}$, where $\varepsilon_k$ is a given keyed block cipher and $f_0:\mathds{B}^\mathsf{N} \longrightarrow \mathds{B}^\mathsf{N}$, $(x_1,...,x_\mathsf{N}) \longmapsto (\overline{x_1},...,\overline{x_\mathsf{N}})$ is the Boolean vectorial negation.
We consider the directed graph $\mathcal{G}_g$, where:
\begin{itemize}
\item vertices are all the $\mathsf{N}$-bit words.
\item there is an edge $m \in \llbracket 0, 2^{\mathsf{N}}-1 \rrbracket$ from $x$ to $\check{x}$ if and only if $g(m,x)=\check{x}$.
\end{itemize}
So if $\mathcal{G}_g$ is strongly connected, then $G_g$ is strongly transitive.
\end{proposition}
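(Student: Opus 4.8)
The plan is to unfold Definition~\ref{def:strongTrans} and to exploit the product structure of $G_g$. First I would record how $G_g$ iterates: writing a point of $\mathcal{X}$ as $A=(a,\alpha)$ with $a\in\mathds{B}^\mathsf{N}$ and $\alpha=(\alpha^0,\alpha^1,\ldots)\in\mathcal{S}_\mathsf{N}$, a straightforward induction gives
\[
G_g^n(A)=\left(a^{(n)},\sigma^n(\alpha)\right),\qquad a^{(0)}=a,\quad a^{(p+1)}=g\!\left(\alpha^p,a^{(p)}\right).
\]
Thus after $n$ steps the internal state $a^{(n)}$ is obtained by feeding the first $n$ message blocks $\alpha^0,\ldots,\alpha^{n-1}$ successively through $g$, while the message component is merely shifted $n$ times. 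The decisive observation is that the transition $a^{(p)}\mapsto a^{(p+1)}=g(\alpha^p,a^{(p)})$ is exactly the edge of $\mathcal{G}_g$ labelled $\alpha^p$ leaving $a^{(p)}$; hence a directed path in $\mathcal{G}_g$ is the very same data as a finite list of message blocks that steers the state along the vertices of that path.

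Next I would translate the metric into neighbourhood information. Fix a source $A=(a,\alpha)$, a target $B=(b,\beta)$ with $\beta=(\beta^0,\beta^1,\ldots)$, and an arbitrary neighbourhood $\mathcal{V}$ of $A$; pick $r>0$ with the open ball $B(A,r)\subseteq\mathcal{V}$. Since $d_e$ is integer-valued and the $k$-th decimal digit of $d_m$ vanishes precisely when the $k$-th message blocks agree, any point whose first component equals $a$ and whose message sequence coincides with $\alpha$ on its first $k_0$ blocks lies at distance $d(A,\cdot)=d_m\leqslant 10^{-k_0}$ from $A$. Choosing $k_0$ with $10^{-k_0}<r$, every such point belongs to $\mathcal{V}$. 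This isolates the freedom at my disposal: I must keep the state equal to $a$ and the first $k_0$ blocks equal to $\alpha^0,\ldots,\alpha^{k_0-1}$, but may choose all later blocks arbitrarily.

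Then I would build the witness. Processing the forced initial blocks from state $a$ lands at a determined intermediate state $a^{(k_0)}$. Because $\mathcal{G}_g$ is strongly connected by hypothesis, there is a directed path from $a^{(k_0)}$ to $b$; let $p_0,\ldots,p_{\ell-1}$ be its sequence of edge labels (with $\ell=0$ if $a^{(k_0)}=b$). Set $n=k_0+\ell\geqslant 1$ and define $A'=(a,\mu)$ by
\[
\mu^{j}=
\begin{cases}
\alpha^{j} & 0\leqslant j<k_0,\\
p_{\,j-k_0} & k_0\leqslant j<n,\\
\beta^{\,j-n} & j\geqslant n.
\end{cases}
\]
The first clause places $A'$ in $\mathcal{V}$; the iteration formula together with the defining property of the path forces $a^{(n)}=b$; and the third clause gives $\sigma^n(\mu)=(\beta^0,\beta^1,\ldots)=\beta$. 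Hence $G_g^n(A')=(b,\beta)=B$, which is exactly the conclusion of strong transitivity.

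The main obstacle, and the step deserving most care, is the simultaneous demand of staying inside $\mathcal{V}$ while hitting $B$ \emph{exactly} rather than approximately. The resolution is structural: the state dynamics consume the message blocks one at a time whereas the message component is only shifted, so the ``middle'' blocks $\mu^{k_0},\ldots,\mu^{n-1}$ can be spent navigating the state graph --- this is where strong connectivity is indispensable --- after which the remaining blocks are copied verbatim from $\beta$ without perturbing the already-reached state. I would finish by checking the bookkeeping: the three index ranges are disjoint (since $n\geqslant k_0$), so no block receives conflicting values, and $n\geqslant 1$, so $A'$ is genuinely iterated rather than required to equal $B$.
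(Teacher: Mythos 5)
Your proposal is correct and follows essentially the same route as the paper's proof: fix the first component, preserve the first $k_0$ message blocks to stay in the given neighbourhood, use strong connectivity of $\mathcal{G}_g$ to choose intermediate blocks steering the internal state to the target, and then append the target's message sequence verbatim. Your write-up is in fact slightly more careful on the bookkeeping (the explicit iteration formula, the edge-count $\ell$ versus the paper's off-by-one indexing of $M_0,\dots,M_{n_1}$, and the degenerate case $a^{(k_0)}=b$), but the underlying idea is identical.
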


\begin{proof}
Let $(x,m)$ and $(\check{x},\check{m})$ be two elements of $\mathcal{X}$, and $\varepsilon >0$. We are looking for $n\in\mathds{N}$ and another point $(x',m')$ close at $\varepsilon$ from $(x,m)$, such that $G_g^n((x',m')) = (\check{x}, \check{m})$.

As $\varepsilon$ may be $<1$, we have necessarily $x'=x$. Let us denote by $n_0=\lfloor log_{10}(\varepsilon) \rfloor +1$, and $\forall j \leqslant n_0,$ $m_j'=m_j$, in such a way that all the points $(x',m')$ having this form are $\varepsilon$-close to $(x,m)$. 
Let $X=G_g^{n_0}(x,m)$.

$\mathcal{G}_g$ being strongly connected, these exist $n_1 \in \mathds{N}$ and edges $M_0, ..., M_{n_1} \in \llbracket 0, 2^\mathsf{N}+1 \rrbracket$ that realize a path between node $X_1$ and $\check{x}$. But the point $$X'= (x; (m_0, m_1,..., m_{n_0}, M_0, M_1, ..., M_{n_1}, \check{m}_0, \check{m}_1,...)$$ is such that:
\begin{itemize}
\item $X'$ is $\varepsilon$-close to $(x,m)$,
\item $G_g^{n_0+n_1}(X')= (\check{x},\check{m})$,
\end{itemize}
proving by doing so the strong transitivity of $G_g$ on $(\mathcal{X},d)$.
\end{proof}

We will now prove that,
\begin{proposition}
\label{prop:regularity}
If $\mathcal{G}_g$ is strongly connected, then $G_g$ is regular.
\end{proposition}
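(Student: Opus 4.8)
The plan is to show that the set of periodic points of $G_g$ is dense in $(\mathcal{X},d)$, which is precisely the regularity demanded by the definition. I would fix an arbitrary $(x,m)\in\mathcal{X}$, with $m=(m^0,m^1,\dots)$, together with an arbitrary $\varepsilon>0$, and then construct a periodic point of $G_g$ lying within distance $\varepsilon$ of $(x,m)$.

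First I would exploit the shape of $d$ exactly as in the proof of Proposition~\ref{prop:transitivity}: since the integral part of $d$ counts the differing bits between internal states, any point at distance $<1$ must share the internal state $x$, and the decimal part of $d_m$ shows that copying the first $n_0=\lfloor\log_{10}(\varepsilon)\rfloor+1$ blocks of $m$ already forces $\varepsilon$-closeness. I would therefore look for a periodic point of the form $(x,m')$, whose sequence $m'$ coincides with $m$ on its first $n_0$ blocks.

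Reading these $n_0$ fixed blocks drives the internal state from $x$ to some word $y=g(m^{n_0-1},g(\dots,g(m^0,x)\dots))\in\mathds{B}^\mathsf{N}$. The decisive step — and the only place strong connectivity enters — is to use that $\mathcal{G}_g$ is strongly connected, so that there is a directed path from $y$ to $x$, namely a finite list of edge labels $M_0,\dots,M_{n_1}$ whose successive application sends $y$ back to $x$. I would then take $m'$ to be the infinite periodic sequence obtained by repeating the finite block $(m^0,\dots,m^{n_0-1},M_0,\dots,M_{n_1})$ of length $n_0+n_1+1$.

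It then remains to check that $(x,m')$ is periodic. By construction $\sigma^{n_0+n_1+1}(m')=m'$, so the message component is fixed after one full period, while over the same number of iterations the internal state travels $x\to y\to x$; hence $G_g^{n_0+n_1+1}(x,m')=(x,m')$, and $(x,m')$ is a periodic point $\varepsilon$-close to $(x,m)$. Since $(x,m)$ and $\varepsilon$ were arbitrary, the periodic points are dense and $G_g$ is regular. I expect this final verification to be routine; the substantive point, exactly as for strong transitivity, is the appeal to strong connectivity that closes the orbit back onto its initial state.
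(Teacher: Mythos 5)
Your proposal is correct and follows essentially the same route as the paper's own proof: fix the internal state $x$, copy the first $n_0=\lfloor\log_{10}(\varepsilon)\rfloor+1$ blocks of $m$ to ensure $\varepsilon$-closeness, use strong connectivity to find edge labels $M_0,\dots,M_{n_1}$ returning the reached state to $x$, and repeat the resulting finite word periodically. Your explicit verification that both the shift component and the internal state close up after $n_0+n_1+1$ iterations is a welcome (if routine) addition that the paper leaves implicit.
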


\begin{proof}
Let $(x,m) \in \mathcal{X}$ and $\varepsilon >0$. We are looking for a point $(x',m')$ that is both periodic and $\varepsilon$-close to $(x,m)$.

Let us define $x'=x$, $n_0 = \lfloor log_{10}(\varepsilon )\rfloor +1 $, and $\forall i \leqslant n_0$, $m_i'=m_i$. Consider $X=G_g^{n_0}(x,m)$. $\mathcal{G}_g$ being strongly connected, there exist edges $M_0, ..., M_{n_1}$ that realize a path between $X_1$ and $x$. Finally, the point defined by: 
$$(x; (m_0, ..., m_{n_0},M_0, ..., M_{n_1}, m_0, ..., m_{n_0},M_0, ..., ))$$ is:
\begin{itemize}
\item periodic,
\item $\varepsilon$-close to $(x,m)$.
\end{itemize}
Having found such a point in the neighborhood of $(x,m)$, we thus can claim that $G_g$ is regular.
\end{proof}

According to Propositions~\ref{prop:transitivity} and~\ref{prop:regularity}, we can conclude that,
\begin{theorem}
\label{thm}
If the directed graph $\mathcal{G}_g$ is strongly connected, then the CBC mode of operation is chaotic according to Devaney.
\end{theorem}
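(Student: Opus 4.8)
The plan is to assemble Devaney's three defining properties—topological transitivity, regularity, and sensitive dependence on initial conditions—for the map $G_{\mathcal{E}_k\circ F_{f_0}}$ on the metric space $(\mathcal{X},d)$, exploiting the fact that the hypothesis ``$\mathcal{G}_g$ is strongly connected'' is precisely the assumption already fed to Propositions~\ref{prop:transitivity} and~\ref{prop:regularity}. Since one iterate of $G_{\mathcal{E}_k\circ F_{f_0}}$ is exactly one cipher block of CBC, establishing Devaney chaos for this map is establishing it for the mode of operation.

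First I would simply record the three ingredients that are already in hand. Continuity of $G_g$ on $(\mathcal{X},d)$ has been proven, so Devaney's framework applies and the Banks \emph{et al.} theorem recalled in Section~\ref{subsec:Devaney} is available. Next, because $\mathcal{G}_g$ is strongly connected by hypothesis, Proposition~\ref{prop:regularity} yields at once that $G_g$ is regular, and Proposition~\ref{prop:transitivity} yields that $G_g$ is strongly transitive in the sense of Definition~\ref{def:strongTrans}.

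The only genuine work is upgrading strong transitivity to ordinary topological transitivity. Given nonempty open sets $U,V\subseteq\mathcal{X}$, I would pick $x_0\in U$ and a target $y\in V$; since $U$ is then a neighborhood of $x_0$, Definition~\ref{def:strongTrans} produces $x'\in U$ and $n\in\mathds{N}$ with $G_g^n(x')=y$, whence $G_g^n(U)\cap V\neq\varnothing$. The subtlety—and the point where I expect the only real care to be needed—is that this $n$ is a priori in $\mathds{N}$ and may equal $0$, while topological transitivity demands an index $k>0$. I would dispose of this by choosing the data so that the instantaneous case cannot occur: the space $\mathcal{X}$ has no isolated points (its message component $\mathcal{S}_\mathsf{N}$ contains uncountably many sequences agreeing on any prescribed prefix), so I may select $x_0\in U$ and $y\in V$ distinct and then shrink to an open neighborhood $U_0\subseteq U$ of $x_0$ with $y\notin U_0$. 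Applying strong transitivity inside $U_0$ gives $x'\in U_0$ with $G_g^n(x')=y$; since $x'\in U_0$ but $y\notin U_0$ we have $x'\neq y$, forcing $n\geqslant 1$, and therefore $G_g^n(U)\cap V\neq\varnothing$ with $n>0$. This argument is uniform and needs no case distinction on whether $V\subseteq U$.

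Finally, having regularity and topological transitivity of the continuous map $G_g$ on the metric space $(\mathcal{X},d)$, the theorem of Banks \emph{et al.} quoted in Section~\ref{subsec:Devaney} furnishes sensitive dependence on initial conditions automatically. Collecting continuity, regularity, transitivity, and sensitivity, I conclude that $G_{\mathcal{E}_k\circ F_{f_0}}$ is chaotic according to Devaney, which by the correspondence between its iterates and CBC cipher blocks is exactly the asserted chaoticity of the CBC mode of operation.
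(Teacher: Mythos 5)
Your proposal is correct and takes essentially the same route as the paper, which simply invokes Propositions~\ref{prop:transitivity} and~\ref{prop:regularity} together with the Banks \emph{et al.} result to obtain sensitivity from regularity and transitivity. The only addition is your careful upgrade of strong transitivity to topological transitivity (excluding the $n=0$ case by using that $(\mathcal{X},d)$ has no isolated points), a step the paper leaves implicit when it remarks that transitivity "is often obtained as a consequence" of strong transitivity.
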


\section{Case studies}
\label{sec:caseStudy}

This section is devoted to illustrations on how the above theorem can be applied in practice. A few old school and insecure block ciphers are studied from the standpoint of their dynamics, and chaos property is stated for most of them. This latter is indeed inherited from the conjunction of 3 fundamental operations who are often related to chaos, namely the shift operation, the bitwise exclusive OR, and the vectorial negation~\cite{guyeux13:bc}.

\subsection{A few transposition ciphers}

Let us give now two examples of block ciphers $\varepsilon$ that have a strongly connected directed graph $\mathcal{G}_{\varepsilon \circ f_0}$, leading to a chaotic behavior for the CBC mode of operation. These two rudimentary examples are taken from so-called transposition cipher methods.

\subsubsection{A trivial but chaotic situation}

Let us consider the trivial example where the inputted block cipher method is the identity. This is the most rudimentary transposition cipher where the cyphertext is equal to the plaintext. In that situation, we will show that the CBC encryption mode has a chaotic behavior due to the shift and XOR operations.

Let $Id : \mathds{B}^\mathsf{N} \longrightarrow \mathds{B}^\mathsf{N}$, $x \longmapsto x$ the identity function, which will act as $\varepsilon_k$ in Figure~\ref{fig:CBCenc}. We consider $g=Id \circ F_{f_0}$ defined from $\mathds{B}^\mathsf{N} \times \llbracket 0, 2^\mathsf{N}-1 \rrbracket$ to $\mathds{B}^\mathsf{N}$, and the graph $\mathcal{G}_g$ as defined in the previous section. We will show that this directed graph is strongly connected, thus leading to a chaotic behavior of the CBC mode of operation.

\begin{lemma}
The directed graph $\mathcal{G}_{Id \circ F_{f_0}}$ is strongly connected.
\end{lemma}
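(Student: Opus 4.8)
The plan is to prove something a little stronger than strong connectivity, namely that $\mathcal{G}_{Id\circ F_{f_0}}$ is the \emph{complete} directed graph on its $2^\mathsf{N}$ vertices, carrying an edge between every ordered pair of vertices (loops included). Strong connectivity then follows immediately, since any vertex $\check{x}$ is reached from any vertex $x$ by a single edge, so a fortiori there are directed paths in both directions between every pair.

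First I would make the transition map fully explicit. Because $Id$ is the identity, $g(m,x)=F_{f_0}(x,m)$, and unfolding the definition of $F_{f_0}$ with $f_0$ the vectorial negation gives, for each coordinate $j$, that the $j$-th bit equals $x_j$ when $m_j=1$ and $\overline{x_j}$ when $m_j=0$. Compactly, $g(m,x)=x\oplus\overline{m}$, where $\overline{m}$ denotes the bitwise complement of the label $m$.

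Next, I would fix two arbitrary vertices $x,\check{x}\in\mathds{B}^\mathsf{N}$ and solve $g(m,x)=\check{x}$ for the label $m$. Reading the previous formula coordinate by coordinate, one needs $m_j=1$ exactly when $\check{x}_j=x_j$ and $m_j=0$ otherwise, that is, $m=\overline{x\oplus\check{x}}$. This label lies in $\llbracket 0,2^\mathsf{N}-1\rrbracket$ and is the unique edge from $x$ to $\check{x}$. Equivalently, for fixed $x$ the map $m\longmapsto g(m,x)$ is a bijection of $\llbracket 0,2^\mathsf{N}-1\rrbracket$ onto $\mathds{B}^\mathsf{N}$ (it is injective, since the output together with $x$ determines $m$, and both sets have cardinality $2^\mathsf{N}$), so the out-neighbourhood of every vertex is the whole vertex set.

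I do not expect any genuine obstacle here beyond careful bookkeeping. The only points to get right are the argument-order convention relating $g(m,x)$ to $F_{f_0}(x,m)$, and the effect of the negation inside $F_{f_0}$, which replaces the naive XOR with $m$ by an XOR with $\overline{m}$; this complement is harmless, as $m\mapsto\overline{m}$ is itself a bijection of the label set, so the reachable targets are unaffected. Once the transition bit $g(m,x)_j$ is written down correctly, surjectivity of $m\mapsto g(m,x)$ is immediate and strong connectivity of $\mathcal{G}_{Id\circ F_{f_0}}$ follows with no path-building argument at all. Combined with Theorem~\ref{thm}, this yields Devaney-chaos of the CBC mode of operation built on the identity cipher.
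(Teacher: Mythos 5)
Your proof is correct, and it takes a genuinely different route from the paper's. You work directly from the formal definition of $F_{f_0}$ given in Section~\ref{sec:proof}, where the edge label $m\in\llbracket 0,2^\mathsf{N}-1\rrbracket$ is a full $\mathsf{N}$-bit block and $g(m,x)=x\oplus\overline{m}$; solving $x\oplus\overline{m}=\check{x}$ for $m$ shows that every ordered pair of vertices is joined by exactly one edge, so the graph is complete and strong connectivity is immediate, with no path construction. The paper instead builds a path of length $k$ (the Hamming distance between $x$ and $\check{x}$) by flipping one differing bit at a time: it takes $m_1,\dots,m_k\in\llbracket 1,\mathsf{N}\rrbracket$ to be the \emph{positions} of the differences and asserts that $g(m,X)$ switches the $m$-th binary digit of $X$. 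Note that the paper's reading is actually inconsistent with its own formal definition of $F_{f_0}$ (under that definition, a single-bit flip corresponds to the edge labelled by the block that is all ones except at the flipped position, not to the position itself); your single-edge argument only goes through under the block-label convention, while the paper's path argument implicitly restricts to the $\mathsf{N}$ single-bit-flip edges. Both establish strong connectivity, but yours is the stronger and cleaner statement (completeness of $\mathcal{G}_{Id\circ F_{f_0}}$) and is the one faithful to the definitions actually used in Proposition~\ref{prop:transitivity}.
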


\begin{proof}
\label{proof:id}
Let us consider two nodes $x$ and $\check{x}$. Let $k$ be the number of different bits in the $\mathsf{N}$-bits binary decomposition of these two integers. We denote by $m_1, ..., m_k \in \llbracket 1, \mathsf{N} \rrbracket$ the positions of these differences. So $g(m_1, g(m_2, ..., g(m_k,x)...))= \check{x}$, as when considering $\varepsilon_k = Id$, the effect of $g(m,X)$ is to switch the $m$-th binary digit of $X$. 

In other words, there is a path between $x$ and $\check{x}$, namely the one that follows edges $m_1, m_2,...$, and then $m_k$. Hence, $\mathcal{G}_{Id \circ F_{f_0}}$ is strongly connected.
\end{proof}

According to above lemma and Theorem~\ref{thm}, we can thus conclude that,

\begin{proposition}
The CBC mode of encryption has a chaotic behavior when the embedded block cipher is the identity.
\end{proposition}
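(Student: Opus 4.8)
The plan is to obtain this proposition as an immediate corollary of the two results that directly precede it, namely the Lemma asserting that $\mathcal{G}_{Id \circ F_{f_0}}$ is strongly connected and Theorem~\ref{thm}. First I would verify that the identity map $Id$ is a legitimate instance of a keyed block cipher $\varepsilon_k$ in the sense required by the dynamical system of Eq.~\eqref{eq:sysdyn}: it is a bijection on $\mathds{B}^\mathsf{N}$ whose associated decryption function $\mathcal{D}_k$ is again the identity, so that $\varepsilon_k \circ \mathcal{D}_k$ is indeed the identity function. This confirms that the whole construction of Section~\ref{sec:proof} applies verbatim with the choice $g = Id \circ F_{f_0}$.

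Next I would simply invoke the Lemma just established, which states that the directed graph $\mathcal{G}_{Id \circ F_{f_0}}$ is strongly connected. This is the only substantive ingredient, and it is already in hand: its proof rests on the observation that, when $\varepsilon_k = Id$, the effect of $g(m,X)$ is to toggle the binary digits of $X$ selected by $m$, so that any $\check{x}$ can be reached from any $x$ by successively flipping their differing bits.

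Finally, I would feed this strong connectedness into Theorem~\ref{thm}, whose hypothesis is precisely that $\mathcal{G}_g$ be strongly connected and whose conclusion is that the CBC mode of operation is chaotic according to Devaney. Specialising $g$ to $Id \circ F_{f_0}$ then yields directly that the CBC encryption mode with identity embedded cipher is chaotic on $(\mathcal{X},d)$, which is exactly the statement of the proposition.

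The main obstacle is essentially nonexistent at the level of the proposition itself, since all the real work — continuity (first proposition), strong transitivity and regularity via the graph (Propositions~\ref{prop:transitivity} and~\ref{prop:regularity}), and the combinatorial argument for strong connectedness (the Lemma) — has already been discharged. The only point demanding care is to check that the degenerate choice $\varepsilon_k = Id$ respects the hypotheses used upstream: Propositions~\ref{prop:transitivity} and~\ref{prop:regularity} were stated for $g = \varepsilon_k \circ F_{f_0}$ with $f_0$ the vectorial negation, a form that $Id \circ F_{f_0}$ clearly satisfies. Once that compatibility is noted, the chain from the Lemma through Theorem~\ref{thm} to the proposition closes immediately.
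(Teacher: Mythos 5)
Your proposal is correct and follows essentially the same route as the paper: the paper likewise obtains this proposition as an immediate consequence of the lemma establishing the strong connectedness of $\mathcal{G}_{Id \circ F_{f_0}}$ combined with Theorem~\ref{thm}. The only addition you make is the (harmless and reasonable) sanity check that $\varepsilon_k = Id$ is an admissible instance of the keyed block cipher in the framework of Section~\ref{sec:proof}.
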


This result may appear as surprising, but it only states that, even if we do not operate any cipher operation at each internal state of the system, the xor operation between the previous state and a new block of the message introduces a sufficient amount of disorder to be unable to predict the future evolution of the internal state. To say this differently, a modification of one block of message, or of the IV, may possibly have effects on the outputted block that cannot be predicted on a long term basis.

\subsubsection{A second transposition cypher method}

Let us now consider that the encryption function is the vectorial negation: $\varepsilon_k = f_0$. Remark that, again in that simple case, we do not consider any key $k$. In that situation, the outputs have a chaotic dependence on the input: any slightly modification on either the input vector or on one block message have effects on the output that cannot be predicted. This chaotic property is a direct consequence of the following lemma:

\begin{lemma}
Suppose that $\mathsf{N}$ is even. So the directed graph $\mathcal{G}_{f_0 \circ F_{f_0}}$ is strongly connected.
\end{lemma}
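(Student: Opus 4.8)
The plan is to bypass the graph combinatorics by first putting the map $g = f_0 \circ F_{f_0}$ into a transparent normal form. First I would make its action explicit coordinate by coordinate: writing $g(m,x) = f_0\!\left(F_{f_0}(x,m)\right)$ and reading off the definition of $F_{f_0}$, one gets $F_{f_0}(x,m)_j = x_j$ when $m_j = 1$ and $F_{f_0}(x,m)_j = \overline{x_j}$ when $m_j = 0$; the outer negation $f_0$ then flips each coordinate, so that $g(m,x)_j = \overline{x_j}$ when $m_j = 1$ and $g(m,x)_j = x_j$ when $m_j = 0$. In other words the two negations partially cancel, leaving the bitwise exclusive or $g(m,x) = x \oplus m$ of the internal state $x$ with the block label $m$. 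This identification is the conceptual heart of the proof, and it is also the one place I expect a slip is possible, namely in tracking which of the two negations acts on which coordinate.

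Once this normal form is available, strong connectivity is immediate and, in fact, does not even use the parity of $\mathsf{N}$. Given any two vertices $x, \check{x} \in \mathds{B}^\mathsf{N}$, I would exhibit the single edge whose label is $m = x \oplus \check{x} \in \llbracket 0, 2^\mathsf{N}-1 \rrbracket$: indeed $g(m,x) = x \oplus (x \oplus \check{x}) = \check{x}$, so there is a direct edge from $x$ to $\check{x}$, and by symmetry the same label realises the reverse edge. Hence $\mathcal{G}_{f_0 \circ F_{f_0}}$ is the complete directed graph on its $2^\mathsf{N}$ vertices, so it is strongly connected; invoking Theorem~\ref{thm} then gives the claimed Devaney-chaos of the CBC mode when $\varepsilon_k = f_0$.

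The delicate point, and the reason I would dwell on the hypothesis that $\mathsf{N}$ is even, is to understand where that assumption is genuinely forced. It is forced precisely if one wants to imitate the proof of the identity case and assemble every path out of the elementary one-coordinate labels $m^{(p)} = (1,\dots,1,0,1,\dots,1)$ carrying a single $0$ at position $p$. For such an edge $g(m^{(p)},x) = x \oplus m^{(p)} = \overline{x} \oplus e_p$, where $e_p$ is the $p$-th canonical basis vector and $m^{(p)} = \mathbf{1} \oplus e_p$ with $\mathbf{1} = (1,\dots,1)$; all these elementary labels therefore have Hamming weight $\mathsf{N}-1$. Under $\oplus$ they generate the whole of $\mathds{B}^\mathsf{N}$ exactly when $\mathsf{N}$ is even, whereas for odd $\mathsf{N}$ each $m^{(p)}$ has even weight and only the even-weight subgroup is reachable through elementary moves, so that half of the vertices stay out of reach. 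This weight-parity bookkeeping, inherited from the extra global negation $f_0$ that shifts Hamming weight by $\mathsf{N} \bmod 2$, is the true obstacle of the one-bit-at-a-time strategy; the XOR normal form above is what dissolves it, since allowing arbitrary block labels $m$ trivialises the connectivity for every $\mathsf{N}$.
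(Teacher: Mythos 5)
Your argument is correct, and it takes a genuinely different route from the paper's. Reading the definition of $F_{f_0}$ literally --- $F_{f_0}(x,m)_j=x_j$ if $m_j=1$ and $\overline{x_j}$ if $m_j=0$, with $m_j$ the $j$-th bit of the block $m$ --- your normal form $g(m,x)=x\oplus m$ is exact, every ordered pair of vertices is joined by the single edge labelled $x\oplus\check{x}$, the graph is complete, and the evenness of $\mathsf{N}$ is never used. The paper does something else: in this section it silently switches to the convention already used in the identity-cipher lemma and in Table~\ref{tab:Caesar}, in which the edge label $m$ is a single position in $\llbracket 1,\mathsf{N}\rrbracket$ and $F_{f_0}(x,m)$ flips only the $m$-th bit, so that $g(m,x)$ negates every coordinate except the $m$-th (your $\overline{x}\oplus e_m$) and only the $\mathsf{N}$ ``elementary'' labels $m^{(p)}$ are available out of each vertex. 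Under that reading the paper's proof is exactly the parity argument you relegate to your final paragraph: following the edges $1,\dots,\mathsf{N}$ with one label $p$ omitted flips each coordinate $\mathsf{N}-2$ times except the $p$-th, which is flipped $\mathsf{N}-1$ times --- odd precisely when $\mathsf{N}$ is even --- so the compound path flips the single bit $p$, and concatenating one such block per position of disagreement joins $x$ to $\check{x}$. You have in effect produced both proofs. The one you foreground is simpler and stronger (completeness of $\mathcal{G}_g$, no parity hypothesis), but it rests on admitting all $2^{\mathsf{N}}$ block labels, as the formal definition of $\mathcal{G}_g$ literally permits; the paper's intended argument --- and the only reason the hypothesis ``$\mathsf{N}$ even'' appears in the statement --- is the restricted one-position-per-edge version, for which your observation that the elementary labels have weight $\mathsf{N}-1$ and generate only the even-weight subgroup when $\mathsf{N}$ is odd is precisely the obstruction. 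It would be worth saying explicitly which convention you adopt, since the two are inconsistent in the source and lead to different theorems about when the hypothesis is needed.
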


\begin{proof}
Let $g=f_0 \circ F_{f_0}$, so 
$$\begin{array}{ll}
g(m,(x_1, ..., x_\mathsf{N})) & = f_0((x_1, ...,x_{m-1},\overline{x_m},x_{m+1}, ..., x_\mathsf{N}))\\
& = (\overline{x_1}, ...,\overline{x_{m-1}},x_m,\overline{x_{m+1}}, ..., \overline{x_\mathsf{N}}) .
\end{array}$$
In other words, the effect of $g$ on $(m,(x_1, ..., x_\mathsf{N}))$ is to switch each bit in $x = (x_1, ..., x_\mathsf{N})$ (that is, to operate the binary negation), except on the $m$-th one.
In other words, in $\mathcal{G}_{f_0 \circ F_{f_0}}$, there is an edge labelled $m$ between two nodes $x_1$ and $x_2$ if and only if only the $m$-th bit in their binary decomposition are equal (all the other bits must be different).

Obviously, starting from $x=(x_1, ...,x_{m-1}, x_m,x_{m+1}, ..., x_\mathsf{N})$ and following edges 2, 3, ..., $\mathsf{N}$ will switch each bit inside vector $x$ an even number of times, except for the first component, which is switched an odd number of times. So the state that is recovered after such a path is $(\overline{x_1}, ...,x_{m-1},x_m,x_{m+1}, ..., x_\mathsf{N})$. Similarly, when we start from $x$ and we follow edges labeled by 1, 3, 4, ..., $\mathsf{N}$, then we arrive on a state having all the same component of $x$, except the second one.

Given any couple $(x, \check{x})$ of nodes in $\mathcal{G}_{f_0 \circ F_{f_0}}$, we can thus extract the position $m_0, m_1..., m_k$ of the differences in their binary decomposition as in Proof~\ref{proof:id}. Then, the path in this graph that starts by $x$ and then following the edges:
\begin{itemize}
\item 1, 2, ..., $\mathsf{N}$, except $m_0$
\item following by 1, 2, ..., $\mathsf{N}$, except $m_1$
\item ...
\item following by 1, 2, ..., $\mathsf{N}$, except $m_k$
\end{itemize}
arrives in $\check{x}$, proving by doing so the strong connectivity of $\mathcal{G}_{f_0 \circ F_{f_0}}$.
\end{proof}

The chaotic behavior of the CBC mode of operation can be deduced again from the above lemma, in the case where $\varepsilon_k = f_0$.

\subsection{Caesar shift}

\begin{figure}[!h]
    \centering
 \subfigure[$\mathsf{N}=2$, $k=1$]{\label{fig:CBCenc}
        \includegraphics[scale=0.35]{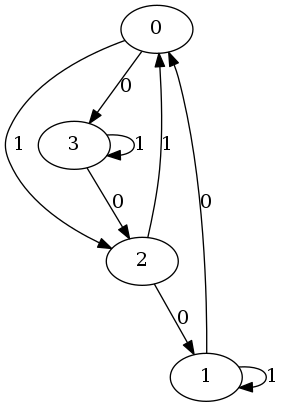}}     \subfigure[$\mathsf{N}=2$, $k=2$]{\includegraphics[scale=0.35]{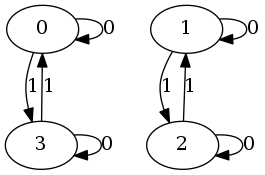}}
    \caption{$\mathcal{G}_g$ of some Caesar ciphers $\mathcal{E}_k(x):\llbracket 0, 2^\mathsf{N}-1 \rrbracket \longrightarrow \llbracket 0, 2^\mathsf{N}-1 \rrbracket , x \longmapsto x+k \mod 2^\mathsf{N}$}
     \label{fig:CBC2}
\end{figure}

We now consider one of the simplest and most widely known substitution cipher, namely the Caesar shift. 
In this latter, each symbol in the plaintext is replaced by a symbol some fixed number of positions down the given alphabet. Translated in the $\mathsf{N}$ binary digits set of integers, this cypher can be written as follows:
\begin{equation}
\begin{array}{cc}
\begin{array}{cccc}
\mathcal{E}_k(x):& \llbracket 0, 2^\mathsf{N}-1 \rrbracket& \longrightarrow &\llbracket 0, 2^\mathsf{N}-1 \rrbracket\\
& x & \longmapsto & x+k \mod 2^\mathsf{N} \end{array}
&;
\begin{array}{cccc}
\mathcal{D}_k(x):& \llbracket 0, 2^\mathsf{N}-1 \rrbracket& \longrightarrow &\llbracket 0, 2^\mathsf{N}-1 \rrbracket\\
& x & \longmapsto & x-k \mod 2^\mathsf{N} \end{array}
\end{array}
\end{equation}
where $k$ is the shift value acting as secret key. We will now show through examples that the CBC mode of operation embedding the Caesar shift can behave either chaotically or not, depending on $k$ and $\mathsf{N}$.

\begin{table}[]
\centering
\begin{tabular}{cc}
\begin{tabular}{c|c|c|c}
$x$ & $m$ & $F_{f_0}(x,m)$ & $g(m,x) = \mathcal{E}_k \circ F_{f_0}(x,m)$ \\
\hline
 0 (0,0) & 0 & 2 (1,0) & 3 \\
 0 (0,0) & 1 & 1 (0,1) & 2 \\
 1 (0,1) & 0 & 3 (1,1) & 0 \\
 1 (0,1) & 1 & 0 (0,0) & 1 \\
 2 (1,0) & 0 & 0 (0,0) & 1 \\
 2 (1,0) & 1 & 3 (1,1) & 0 \\
 3 (1,1) & 0 & 1 (0,1) & 2 \\
 3 (1,1) & 1 & 2 (1,0) & 3 \\
 \end{tabular}
&
\begin{tabular}{c|c|c|c}
$x$ & $m$ &  $F_{f_0}(x,m)$ & $g(m,x)$ \\
\hline
 0 (0,0) & 0 & 2 (1,0) & 0 \\
 0 (0,0) & 1 & 1 (0,1) & 3 \\
 1 (0,1) & 0 & 3 (1,1) & 1 \\
 1 (0,1) & 1 & 0 (0,0) & 2 \\
 2 (1,0) & 0 & 0 (0,0) & 2 \\
 2 (1,0) & 1 & 3 (1,1) & 1 \\
 3 (1,1) & 0 & 1 (0,1) & 3 \\
 3 (1,1) & 1 & 2 (1,0) & 0 \\
 \end{tabular}
\\
$\mathsf{N}=2$, $k=1$ & $\mathsf{N}=2$, $k=2$
\end{tabular}
\caption{$g(x,m)$ for two shifts in Caesar cipher}
\label{tab:Caesar}
\end{table}

Table~\ref{tab:Caesar} contains the $g(x,m)$ values for a shift of 1 and 2 respectively, in Caesar cipher over $2$-bit blocks. In this table, we can see that, when a shift of 2 value is operated in the Caesar based CBC, it is impossible to reach the blocks (1,0) and (1,1) when starting from (0,0). This fact, which is illustrated in Figure~\ref{fig:CBC2}, leads to non chaotic behavior of the CBC mode of encryption, as the graph $\mathcal{G}_g$ is not strongly connected. Conversely, this graph becomes strongly connected when operating a shift of 1, as depicted in the same figure, leading to a chaotic behavior of the mode of encryption.

\begin{figure}[!h]
    \centering
 \subfigure[$\mathsf{N}=3$, $k=1$]{\label{fig:CBCenc}
        \includegraphics[scale=0.35]{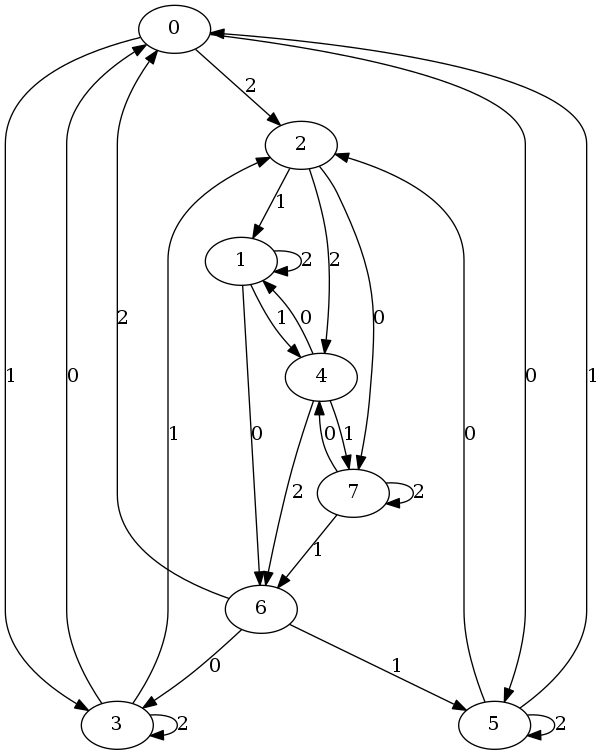}}     \subfigure[$\mathsf{N}=3$, $k=2$]{\includegraphics[scale=0.35]{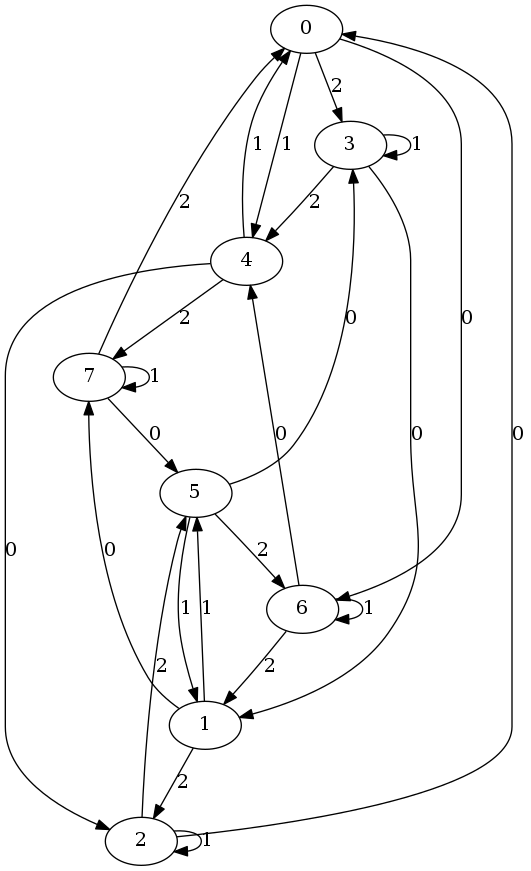}}
    \caption{$\mathcal{G}_g$ of some Caesar ciphers $\mathcal{E}_k(x):\llbracket 0, 2^\mathsf{N}-1 \rrbracket \longrightarrow \llbracket 0, 2^\mathsf{N}-1 \rrbracket , x \longmapsto x+k \mod 2^\mathsf{N}$}
     \label{fig:CBC3}
\end{figure}

Figure~\ref{fig:CBC3}, for its part, presents the graph of iteration of the Caesar based CBC mode of operation, with the same kind of shifts, but when operating on blocks of size 3. 
We can verify that, at each time, the cipher block chaining behaves chaotically. In that situation, we can guarantee that any error on the IV (starting state) or on the message to encrypt (edges to browse) may potentially lead to a completely different list of visited states, that is, of a completely different ciphertext.

\section{Discussion}

It has been proven in the previous section that some 
well chosen block ciphers can lead to a chaotical behavior for the CBC mode of operation. Indeed, this mode of operation can be seen as a discrete dynamical system (recurrent sequence), whose evolution can thus be studied using common tools taken from the mathematical analysis. In particular, its dynamics can be deeply investigated, both qualitatively and quantitatively, using the rigorous mathematical topology field of research. A subfield of this discipline specifically focus on complex dynamics, defining when such recurrent sequences have evolution over time that cannot be easily understood, whose effects of a slight alteration of the initial term of the sequence may potentially lead to unpredictable mid term evolution, and so on.

This subfield, namely the mathematical theory of chaos, has known several developments in various directions these last four decades. Being mature now, this theory allows to measure the complexity of the dynamics by evaluating, for instance, the expansivity, the sensibility, the entropy of the dynamical system, or its ability to visit the whole space, to name a few~\cite{bahi2011efficient}. These properties may impact, for instance, the ability of the mode of operation to face side-channel attacks, or to measure the stability of the mode operation against transmission errors.

Devaney's definition of chaos is one of the oldest formalization of such complex dynamics, and a lot of new notions have been introduced since this first approach of chaos. Similarly, the block ciphers investigated in the previous section are out-of-date. Indeed, the interest of our work is not to provide a collection of secure and complex CBCs, but to initiate a complementary approach for studying such modes of operation. Our intention is to show how to model such modes, and that it is possible to study the complexity of their dynamics. Up-to-date block ciphers and modes of operation, together with topological analyses using most recent developments in this field, need to be investigated, while the interest of each topological property of complexity must be related to desired objectives for each mode of operation.
+
\section{Conclusion and future work}

In this paper, we proved that CBC mode of operation behaves as Devaney's topological chaos if the iteration function used is the vectorial Boolean negation. We applied these results to proof the chaotic behaviour of the CBC mode of operation. The vectorial Boolean negation function has been chosen here, but the process remains general and other iterate functions g can be used. The sole condition is to prove that $G_g$ satisfies the Devaney's chaos property.
 Thereafter, we have given some examples of keyed block ciphers, which can be used by the CBC mode, and which can lead to a chaotic behavior for this mode seeing that they have a strongly connected directed graph. These examples are taken from so-called transposition cipher methods.
 
In future work, we will investigate other choices of keyed block ciphers. This same canvas of our contribution will be explored to proof the chaotic behaviour of the other modes of operation. We will treat the DES (Data encryption standard block) cipher algorithm so as to put in evidence his chaotic also behavior. Doing theses contributions will allow us to enrich the block cipher algorithm's field.

%\begin{acknowledgements}
%If you'd like to thank anyone, place your comments here
%and remove the percent signs.
%\end{acknowledgements}

% BibTeX users please use one of
%\bibliographystyle{spbasic}      % basic style, author-year citations
%\bibliographystyle{spmpsci}      % mathematics and physical sciences
%\bibliographystyle{spphys}       % APS-like style for physics
%\bibliography{}   % name your BibTeX data base

% Non-BibTeX users please use

\bibliographystyle{plain}
\bibliography{biblio}
\end{document}